\newcommand{\PP}{\mathbb{P}}
\newcommand{\true}{\mathrm{true}}
\newcommand{\fig}{{\bf Fig.}~}
\newcommand{\edit}[1]{#1}
\newtheorem{theorem}{Theorem}%
\begin{document}

\title[Variational Markov chain mixtures]{Variational Markov chain mixtures with automatic component selection}

\author*[1]{\fnm{Christopher E.} \sur{Miles}}\email{chris.miles@uci.edu}

\author*[2]{\fnm{Robert J.} \sur{Webber}}\email{rwebber@ucsd.edu}

\affil[1]{\orgdiv{Department of Mathematics}, \orgname{University of California Irvine}, \orgaddress{\street{419 Rowland Hall}, \city{Irvine}, \postcode{92697}, \state{CA}, \country{United States}}}

\affil[2]{\orgdiv{Department of Mathematics}, \orgname{University of California San Diego}, \orgaddress{\street{9500 Gilman Drive}, \city{La Jolla}, \postcode{92093}, \state{CA}, \country{United States}}}

%\linenumbers
\abstract{
Markov state modeling has gained popularity in various scientific fields since it reduces complex \edit{time-series} data sets \edit{into} transitions between a few states.
Yet common Markov state modeling frameworks assume a single Markov chain describes the data, so they suffer from an inability to discern heterogeneities.
As an alternative, this paper models time-series data using a \emph{mixture} of Markov chains, and it automatically determines the number of mixture components using the variational expectation-maximization algorithm.
Variational EM simultaneously identifies the number of Markov chains and the dynamics of each chain without expensive model comparisons or posterior sampling.
As a theoretical contribution, this paper identifies the natural limits of Markov state mixture modeling by proving a lower bound on the classification error.
It then presents numerical experiments where variational EM \edit{achieves performance consistent with the theoretically optimal error scaling}.
The experiments are based on synthetic and observational data sets including \texttt{Last.fm} music listening, ultramarathon running, and gene expression.
In each of the three data sets, variational EM leads to 
the identification of meaningful heterogeneities.}

\keywords{Markov chain mixtures, variational Bayes, EM algorithm, Markov state models}

\maketitle

\section{Introduction}

This paper considers data emerging from observations, simulations, and experiments, which \edit{come} in the form of dynamical trajectories.
The trajectories can be modeled by a finite-state Markov chain that jumps between different regions (``states'') of phase space.
This modeling approach is known as \emph{Markov state modeling} \cite{husic2018markov}, and it has become increasingly popular over the past two decades, including applications in chemistry \cite{noe2013projected,roblitz2013fuzzy, kohs2022nonparametric}, biology \cite{chu2017markov,tse2018rare,tan2018exploring,fang2018cell}, and climate science \cite{finkel2023revealing,souza2023transforming,springer2024agnostic}.

\edit{Markov state modeling has roots in classical statistical modeling, including clustering-based discretizations and finite-state Markov chains. 
Yet the data sets are changing. 
Observations, simulations, and experiments are now generating large collections of short, noisy, and heterogeneous trajectories that cannot be described by a single Markov chain. To address the contemporary modeling challenge, this work combines the discretization step in Markov state modeling with a Markov chain mixture approach.
As specific contributions, the paper proposes an efficient modeling algorithm and it derives a non-asymptotic information-theoretic lower bound on the classification error achievable from finite trajectory lengths.}

\subsection{Goals and contributions}

This paper argues that Markov state modeling, which is traditionally applied to one Markov chain at a time, can be applied to learn several Markov chains simultaneously.
Markov state mixture modeling was previously proposed and developed in the works \cite{smyth1997clustering,ramoni2002bayesian,batu2004inferring,melnykov2016clickclust,zhou2021you,das2023clustering, das2023utilizing,de2017use,frydman2005estimation}.
The paper builds on the existing research by asking and answering the following questions:
\begin{itemize}
\item[(a)] What Markov chain mixture modeling approach can be easily adapted to high dimensions?
\item[(b)] What efficient algorithm can optimize the parameters in the model?
\item[(c)] What theoretical accuracy can the Markov mixture model attain?
\end{itemize}

For part (a), the paper applies a simple model \cite{smyth1997clustering} based on a mixture of Markov chains in a finite state space.
Traditionally, the finite states were determined using expert knowledge of the predominant configurations of the system.
However, the modern approach automatically identifies the states with a machine learning method such as spectral clustering.

For part (b), the paper adapts the variational expectation-maximization method \cite[Sec.~10]{bishop2006pattern} to the setting of Markov chain mixture modeling.
Variational EM has a major advantage over the classical EM algorithm, since classical EM requires fitting several models and then performing a post-hoc model comparison to determine the number of mixture components \cite{keribin2000consistent,smyth1997clustering,melnykov2016clickclust}.
In contrast, variational EM determines the number of clusters automatically \cite{corduneanu2001hyperparameters,mackay2001local,rousseau2011asymptotic}.
\edit{
Here the variational EM approach is developed for finite-state Markov chains.
Extending this framework to continuous-state or non-Markovian dynamics is a natural direction for future work.}

For part (c), the paper quantifies the optimal theoretical accuracy of any Markov chain mixture model in terms of the Kullback-Leibler divergence between mixture components.
The main theoretical contribution is the following theorem, with the proof appearing in Sec.~\ref{sec:analysis}.
\begin{theorem}[Classification error bound]
\label{thm:1}
    Consider any statistical estimator $\hat{Z}$ for the label $Z$ in the Markov chain mixture model 
    \begin{equation*}
    \begin{aligned}
    \mathbb{P}\{Z = i\} &= \mu(i), \\
    \mathbb{P}\{Y_0 = \alpha \,|\, Z = i\} &= \nu_i(\alpha), \\
    \mathbb{P}\{Y_t = \beta \,|\, 
    Z = i, \, Y_{t-1} = \alpha\} &= P_i(\alpha, \beta).
    \end{aligned}
    \end{equation*}
    The estimator can depend on the observed data $\bm{Y}$ and the model parameters but not on the unknown label $Z$.
    Then, the classification error is at least
    \begin{equation*}
        \mathbb{P}\{\hat{Z} \neq Z\}
        \geq \frac{1}{2} \sum_{i=1}^k \max_{j \neq i} \frac{{\rm e}^{-\operatorname{D}_{\rm KL}(\mathbb{P}_i \,\Vert\, \mathbb{P}_j)}}{\mu(i)^{-1} + \mu(j)^{-1}}.
    \end{equation*}
    In this expression, the Kullback-Leibler divergence is defined as
    \begin{equation*}
        \operatorname{D}_{\rm KL}(\mathbb{P}_i \,\Vert\, \mathbb{P}_j) =
        \sum_{\bm{Y}} \mathbb{P}_i(\bm{Y}) \log\biggl(\frac{\mathbb{P}_i(\bm{Y})}{\mathbb{P}_j(\bm{Y})}\biggr).
    \end{equation*}
    The summation runs over all possible trajectories of length $T$, and
    \begin{equation*}
        \mathbb{P}_i(\alpha_0, \ldots, \alpha_T)
        = \nu_i(\alpha_0) \prod_{t=0}^{T-1} P_i(\alpha_t, \alpha_{t+1}).
    \end{equation*}
    is the probability of observing the trajectory $\bm{Y} = (\alpha_0, \ldots, \alpha_T)$ given that $Z = i$.
\end{theorem}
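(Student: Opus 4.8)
The plan is to remove the estimator from the problem, reduce the multiclass bound to a family of two-point (binary) problems, and then convert each binary problem into Kullback--Leibler divergence via the Bhattacharyya affinity. First I would exploit that $\hat{Z}$ is a deterministic function of $\bm{Y}$, so the sets $R_i = \{\bm{Y} : \hat{Z}(\bm{Y}) = i\}$ partition the trajectory space. The probability of correct classification is then $\sum_i \mu(i)\,\mathbb{P}_i(R_i) = \sum_{\bm{Y}} \mu(\hat{Z}(\bm{Y}))\,\mathbb{P}_{\hat{Z}(\bm{Y})}(\bm{Y}) \le \sum_{\bm{Y}} \max_i \mu(i)\mathbb{P}_i(\bm{Y})$, the value attained by the MAP rule. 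Subtracting from $1 = \sum_{\bm{Y}}\sum_i \mu(i)\mathbb{P}_i(\bm{Y})$ gives the estimator-free lower bound
\[ \mathbb{P}\{\hat{Z}\neq Z\}\ \ge\ \sum_{\bm{Y}}\Bigl(\sum_i \mu(i)\mathbb{P}_i(\bm{Y}) - \max_i \mu(i)\mathbb{P}_i(\bm{Y})\Bigr). \]

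Next I would convert this ``total minus maximum'' into pairwise minima using an elementary pointwise inequality: for any nonnegative numbers $a_1,\dots,a_k$,
\[ \sum_i a_i - \max_i a_i \ \ge\ \tfrac12\sum_i \max_{j\neq i}\min(a_i, a_j), \]
which I would verify by sorting the $a_i$ (it reduces to $a_3 + \cdots + a_k \ge 0$). Applying this with $a_i = \mu(i)\mathbb{P}_i(\bm{Y})$ and using $\sum_{\bm{Y}}\max_{j} f_j \ge \max_{j}\sum_{\bm{Y}} f_j$ to pull the sum over trajectories inside the maximum yields
\[ \mathbb{P}\{\hat{Z}\neq Z\}\ \ge\ \tfrac12 \sum_i \max_{j\neq i} \sum_{\bm{Y}}\min\bigl(\mu(i)\mathbb{P}_i(\bm{Y}),\, \mu(j)\mathbb{P}_j(\bm{Y})\bigr), \]
where the inner sum is precisely the Bayes risk for distinguishing chain $i$ from chain $j$.

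Finally I would bound each two-point term from below. Writing $m = \min(\mu(i)\mathbb{P}_i,\mu(j)\mathbb{P}_j)$ and $M = \max(\mu(i)\mathbb{P}_i,\mu(j)\mathbb{P}_j)$, Cauchy--Schwarz gives $\bigl(\sum_{\bm{Y}}\sqrt{mM}\bigr)^2 \le \bigl(\sum m\bigr)\bigl(\sum M\bigr)$, where $\sum_{\bm{Y}}\sqrt{mM} = \sqrt{\mu(i)\mu(j)}\sum_{\bm{Y}}\sqrt{\mathbb{P}_i\mathbb{P}_j}$ and $\sum M \le \mu(i)+\mu(j)$. A Jensen step applied to $\log \mathbb{E}_{\mathbb{P}_i}\bigl[\sqrt{\mathbb{P}_j/\mathbb{P}_i}\bigr]$ shows the Bhattacharyya coefficient satisfies $\bigl(\sum_{\bm{Y}}\sqrt{\mathbb{P}_i\mathbb{P}_j}\bigr)^2 \ge {\rm e}^{-\operatorname{D}_{\rm KL}(\mathbb{P}_i\,\Vert\,\mathbb{P}_j)}$. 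Combining the three estimates gives
\[ \sum_{\bm{Y}}\min\bigl(\mu(i)\mathbb{P}_i,\mu(j)\mathbb{P}_j\bigr)\ \ge\ \frac{\mu(i)\mu(j)}{\mu(i)+\mu(j)}\,{\rm e}^{-\operatorname{D}_{\rm KL}(\mathbb{P}_i\,\Vert\,\mathbb{P}_j)} = \frac{{\rm e}^{-\operatorname{D}_{\rm KL}(\mathbb{P}_i\,\Vert\,\mathbb{P}_j)}}{\mu(i)^{-1}+\mu(j)^{-1}}, \]
and substituting into the previous display is exactly the claim.

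I expect the middle assembly to be the main obstacle rather than the analytic two-point estimate. Each binary bound controls a \emph{pair} of error contributions, so a naive sum over each class's best competitor double-counts error mass and can overshoot the true error; it is precisely the factor-$\tfrac12$ pointwise inequality together with the sum--maximum interchange that lets the binary bounds be glued into the stated $\tfrac12\sum_i\max_{j\neq i}$ form without overcounting. The Bhattacharyya-to-KL conversion is a standard Le Cam/Bretagnolle--Huber-type computation by comparison.
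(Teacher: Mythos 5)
Your proof is correct, but it follows a genuinely different route from the paper's. The paper conditions on $\bm{Y}$, bounds the conditional error $1-\max_i \mathbb{P}\{Z=i\,|\,\bm{Y}\}$ below by the Gini-type quantity $\tfrac12\sum_i \mathbb{P}\{Z=i\,|\,\bm{Y}\}(1-\mathbb{P}\{Z=i\,|\,\bm{Y}\})$ via the inequality $1-\alpha\ge(1-\alpha^2)/2$, integrates to obtain $\tfrac12\sum_i\mu(i)\,\mathbb{E}[1-\mathbb{P}\{Z=i\,|\,\bm{Y}\}\,|\,Z=i]$, and then for each $i$ retains a single competitor $j$ in the posterior ratio and applies Jensen's inequality twice to $\mathbb{E}[\mu(j)\mathbb{P}_j/(\mu(i)\mathbb{P}_i+\mu(j)\mathbb{P}_j)\,|\,Z=i]$. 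You instead work with the unnormalized joint densities $\mu(i)\mathbb{P}_i(\bm{Y})$, reduce the multiclass error to pairwise binary Bayes risks through the sorting lemma $\sum_i a_i-\max_i a_i\ge\tfrac12\sum_i\max_{j\ne i}\min(a_i,a_j)$ plus the sum--max interchange, and then control each binary risk by the Bhattacharyya affinity via Cauchy--Schwarz and a single Jensen step (a Bretagnolle--Huber-type estimate); all of these steps check out, including the lemma (after sorting it indeed reduces to $a_3+\cdots+a_k\ge0$) and the final constant, which matches the theorem exactly since $\mu(i)\mu(j)/(\mu(i)+\mu(j))=(\mu(i)^{-1}+\mu(j)^{-1})^{-1}$. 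Your version is more modular --- the multiclass-to-binary reduction is generic and each pairwise term $\sum_{\bm{Y}}\min(\mu(i)\mathbb{P}_i,\mu(j)\mathbb{P}_j)$ is a bona fide two-hypothesis Bayes risk, so a sharper binary bound could be substituted later --- and it is also slightly more careful in explicitly reducing an arbitrary estimator to the MAP rule, a step the paper leaves implicit. The paper's version is shorter, needs only Jensen's inequality as an analytic tool, and isolates the intermediate quantity $\sum_i\mu(i)\,\mathbb{E}[1-\mathbb{P}\{Z=i\,|\,\bm{Y}\}\,|\,Z=i]$ as the analytically tractable pivot of the argument.
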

The theorem \edit{
lower bounds the classification error in terms of the Kullback--Leibler (KL) divergence between the trajectory distributions in the mixture model. Conceptually, it is a hypothesis-testing statement: if the mixture includes two similar trajectory distributions, no estimator can reliably assign trajectories to the correct distributions.} 
\edit{The proof lower bounds the classification error by comparing pairs of components and quantifying how well they can be distinguished using $\operatorname{D}_{\rm KL}(\mathbb{P}_i \,\Vert\, \mathbb{P}_j)$. 
The resulting bound exhibits the $e^{-\operatorname{D}_{\rm KL}}$ scaling familiar from the Bretagnolle--Huber inequality~\cite{bretagnolle1979estimation}; see Sec.~\ref{sec:analysis} for further discussion.}

\edit{In the context of Markov chain mixture modeling, the KL divergence in Theorem~\ref{thm:1} is the divergence between \emph{length-$T$ trajectory distributions}. Under mild regularity (e.g., ergodicity), this divergence grows linearly with $T$ at a rate determined by the stationary transition dynamics. Consequently, the lower bound in Theorem~\ref{thm:1} decays exponentially in $T$, highlighting trajectory length as a dominant driver of identifiability.}

\subsection{Plan for the Paper}

The rest of the paper is organized as follows.
Section \ref{sec:msm_extension} introduces Markov state modeling and Markov state mixture modeling,
Section \ref{sec:history} reviews previous related approaches, Section \ref{sec:description} introduces the variational EM algorithm,
Section \ref{sec:analysis} provides theoretical analysis, Section \ref{sec:experiments} presents numerical experiments, and Section \ref{sec:conclude} concludes.

\subsection{Notation}

Scalars are in regular typeface: $k, s, t, T, N, D$.
Lowercase letters $k, s, t$ indicate scalar quantities that can vary from line to line.
Uppercase letters $T, N, D$ indicate scalar quantities that are fixed by the data set.
Finite and infinite sets are in uppercase calligraphic letters, e.g., $\mathcal{X}, \mathcal{Y}$. 
Elements of finite sets are often in lowercase Greek letters, e.g., $\alpha, \beta \in \mathcal{Y}$.
Vectors are typically written in bold lowercase typeface: $\bm{\mu}, \bm{\nu}, \bm{q}$.
Matrices are written in bold uppercase typeface: $\bm{K}, \bm{P}$.
The entries of vectors and matrices are indicated with parentheses and bold typeface: $\mu(i)$, $D(i, i)$.

Trajectories are a special class of vectors indexed by time $t = 0, 1, \ldots, T$ and written in uppercase bold typeface: $\bm{X}, \bm{Y}$.
The states of trajectories are written in regular typeface with subscripts: $X_t, Y_t$.
Superscripts denote trajectories in a collection: $\bm{Y}^1, \bm{Y}^2, \ldots, \bm{Y}^N$.

The probability measure $\mathbb{P}$ represents a ground-truth model,
while the probability measure $\hat{\mathbb{P}}$ is a data-driven estimate.

\section{Modeling: Critiquing the Markov State Model} \label{sec:msm_extension}

A Markov state model is a classic approach for \edit{time-series} analysis \edit{with a long history} \cite{husic2018markov}.
It has recently become popular in chemistry \cite{noe2013projected,roblitz2013fuzzy, kohs2022nonparametric}, biology \cite{chu2017markov,tse2018rare,tan2018exploring,fang2018cell}, and climate science \cite{finkel2023revealing,souza2023transforming,springer2024agnostic}.
A Markov state model approximates \edit{time-series} data as a finite-state Markov chain that jumps between suitably defined ``states'' of the system.

A traditional Markov state model relies on three assumptions that are described in the following sections:
the finite-state assumption (Sec.~\ref{sec:finite_state}), the Markovian assumption (Sec.~\ref{sec:markovian}), and the one-chain assumption (Sec.~\ref{sec:one_chain}).
However, the one-chain assumption may be inappropriate, and Sec.~\ref{sec:extension} will relax the one-chain assumption and generalize the Markov state model to a Markov state mixture model. 

\subsection{The Finite-state Assumption} \label{sec:finite_state}
In a Markov state model, a dynamical system on a general state space $\mathcal{X}$ is reduced to a dynamical system on a finite state space $\mathcal{Y} = \{1, 2, \ldots, s\}$.
The finite states are assumed to be internally homogeneous and meaningfully different from one another.
Clearly delineated states are crucial to the interpretability of the Markov state model.

There are various approaches for defining an appropriate set of finite states.
As one approach, the states can be defined based on \edit{domain expertise}. 
For example, user activity on \texttt{MSNBC.com} (``clickstreams'') can be reduced to a state space $\mathcal{Y}$ consisting of different types of webpages, such as the homepage (state 1), the news pages (state 2), and the sports pages (state 3) \cite{melnykov2016clickclust}.
These state definitions are based on the \texttt{MSNBC.com} site map and the conventional division of articles into ``sports'' versus ``news''. 

As a more systematic alternative, the state space $\mathcal{X}$ can be decomposed using a distance function $d: \mathcal{X} \times \mathcal{X} \rightarrow \mathbb{R}$ and a set of data centers $\{\bm{c}_1, \ldots, \bm{c}_s\}$.
This leads to the division into Voronoi cells
\begin{align*}
    &\mathcal{X} = \mathcal{X}_1 \cup \cdots \cup \mathcal{X}_s, \quad \text{where} \\
    &\mathcal{X}_i = \{\bm{x} \in \mathcal{X} : d(\bm{x}, \bm{c}_i) \leq d(\bm{x}, \bm{c}_j) \text{ for } j \neq i\}.
\end{align*}
Any trajectory $\bm{X}$ on the state space $\mathcal{X}$ can be approximated by a trajectory $\bm{Y}$ on the reduced state space $\mathcal{Y} = \{1, \ldots, s\}$ by setting $Y_t = i$ if $X_t \in \mathcal{X}_i$ with arbitrary tie-breaking at the boundaries.
\edit{
The rest of the paper will work with these discretized trajectories.
}

Identifying a suitable distance function $d$ and a set of data centers $\{\bm{c}_1, \ldots, \bm{c}_s\}$ can be tricky.
When the data lies in $\mathcal{X} = \mathbb{R}^D$,
one popular method combines all the trajectory locations into a data set $\{\bm{x}_i\}_{1 \leq i \leq M}$ and selects the centers $\{\bm{c}_1, \ldots, \bm{c}_s\}$ that minimize the sum of square distances
\begin{equation*}
    \sum_{i=1}^M \min_{1 \leq j \leq s} d(\bm{x}_i, \bm{c}_j)^2.
\end{equation*}
where $d(\bm{x}, \bm{y}) = \lVert \bm{x} - \bm{y} \rVert$ is the Euclidean distance.
This method is called $k$-means clustering \cite{lloyd1982least}.

$k$-means clustering is common but not always effective at identifying finite states.
Spectral clustering \cite{coifman2005geometric} is an alternative method that adapts to the nonlinear manifold structure of the input data.
Spectral clustering can be applied to any state space (not just $\mathcal{X} = \mathbb{R}^D$), as long as there is a kernel function $k:\mathcal{X} \times \mathcal{X} \rightarrow \mathbb{R}$ which quantifies the similarity between data points.
\fig~\ref{fig:clustering} illustrates spectral clustering on a spiral data set using the Gaussian kernel $k(\bm{x}, \bm{y}) = \exp(- \tfrac{1}{2 \sigma^2} \lVert \bm{x} - \bm{y} \rVert^2)$ with bandwidth $\sigma = 1$. \edit{Spectral clustering is now standard practice in Markov state modeling, so the algorithmic details are omitted; see \cite{coifman2005geometric} for the classical formulation and \cite{chen2023randomly} for a scalable implementation based on randomized linear algebra.
Later, Sec.~\ref{sec:gene} will apply spectral clustering to construct Markov states for gene expression data.}

\begin{figure}[t]
\centering
\includegraphics[width=\linewidth]{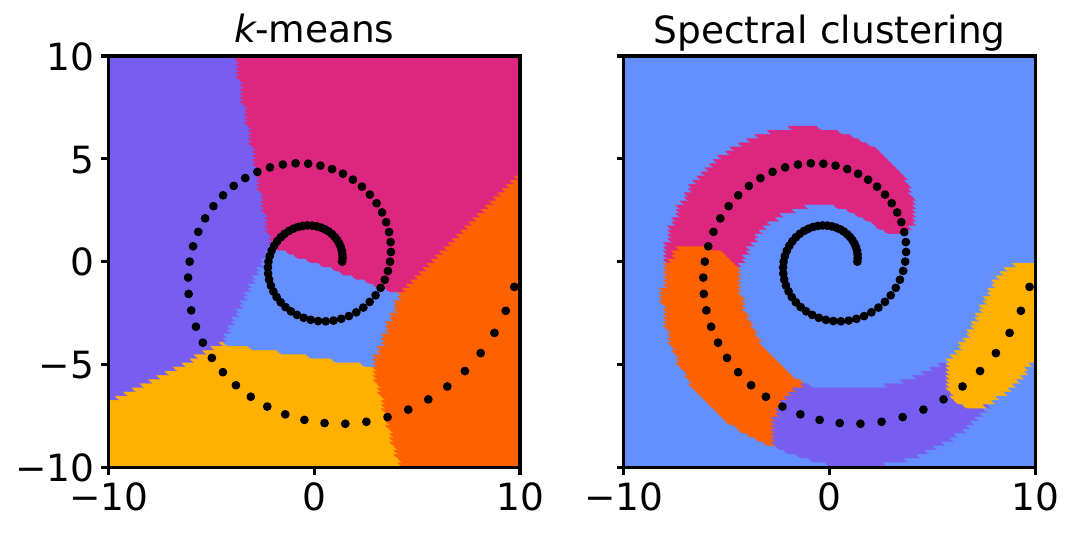}
\caption{Comparison of $k$-means (left) versus spectral clustering (right) applied to $M = 100$ data points in a spiral.
Spectral clustering adapts to the nonlinear manifold structure of the data, $k$-means does not. \label{fig:clustering}}
\end{figure}

\subsection{The Markovian Assumption} \label{sec:markovian}

In a Markov state model, each trajectory $\bm{Y}$ is assumed to be \emph{Markovian}.
The Markovian assumption means that the position $Y_t$ depends only on the immediate past position $Y_{t-1}$, regardless of all the earlier positions $Y_{t-2}, Y_{t-3}, \ldots$.
Thus, all the transition probabilities can be written as
\begin{align*}
    &\mathbb{P}\{Y_t = \beta \,|\, Y_0 = \alpha_0, \ldots, Y_{t-1} = \alpha_{t-1} \} \\
    &= P(\alpha_{t-1}, \beta),
\end{align*}
for a suitable transition matrix $\bm{P} \in \mathbb{R}^{s \times s}$.

Like all model assumptions, the Markovian assumption is an idealization of the truth.
Many dynamical systems are not precisely Markovian because they exhibit prolonged history effects \cite{chierichetti2012are}.
As an example of history dependence, consider the travel patterns of adults in Nanjing, China \cite{zhou2021you}. 
On a given day, a parent is unlikely to commute to their child's school more than twice (first taking the child to school, second picking the child up from school).
Because of this history dependence, the location of the adult during the day does not precisely form a Markov chain.
Nonetheless, a rich and detailed description of the daily lives of residents is possible using the Markov chain assumption \cite{zhou2021you}.
In general, a Markov state model assumes that history effects can be disregarded while still providing a meaningful analysis.

\edit{When strict Markovianity fails due to finite memory, a standard remedy is \emph{delay embedding}: redefine the state at time $t$ as a tuple of lagged observations, e.g., $(Y_t,Y_{t-1},\ldots,Y_{t-m+1})$. Delay embedding converts an $m$th-order Markov chain into a first-order Markov chain on an enlarged state space.
Markov state modeling can be applied directly to the delay-embedded process, but the trade-off is an increased state-space size and hence more parameters to estimate.}

\subsection{The One-Chain Assumption} \label{sec:one_chain}

The final assumption of a Markov state model is rarely stated yet fundamental.
The standard presentations of Markov state modeling \cite{pande2010everything,chodera2014markov,husic2018markov} all assume that a single Markov chain generates the trajectories.
If the data set consists of multiple trajectories $\bm{Y}^1, \bm{Y}^2, \ldots, \edit{\bm{Y}^N}$,
the trajectories are all assumed to be generated from the same transition matrix $\bm{P}$.
The one-chain assumption is well-justified if the trajectories are computer simulations from a fixed model.
Yet, the one-chain assumption 
is challenged by the heterogeneities of real-world observations and experiments. 
For example, is there really one type of web user on \texttt{MSNBC.com}?
Is there really one type of commuter in Nanjing, China?
Several papers in the statistics literature go beyond the one-chain assumption and instead they analyze \texttt{MSNBC.com} \cite{melnykov2016clickclust} users and Nanjing commuters \cite{zhou2021you} using a \emph{mixture} of Markov chains with distinct transition matrices $\bm{P}_1, \ldots, \bm{P}_k$.

The resulting Markov chain mixture model will be described in the next section, and it will be the focus of the remainder of the paper.
Although joint inference of states and Markov chain mixture modeling parameters is possible \cite{safinianaini2024copymix},
the paper assumes the state sequences have already been defined using domain knowledge or data-driven clustering. 
This design choice decouples state assignment from mixture inference
to support scalable, domain-adaptable modeling.

\subsection{The Markov chain mixture model} \label{sec:extension}

In a Markov chain mixture model, each trajectory $\bm{Y}$ is generated as follows:
\begin{equation}
\label{eq:the_model}
\begin{aligned}
    \mathbb{P}\{Z = i\} &= \mu(i), \\
    \mathbb{P}\{Y_0 = \alpha \,|\, Z = i\} &= \nu_i(\alpha), \\
    \mathbb{P}\{Y_t = \beta \,|\, 
    Z = i, \, Y_{t-1} = \alpha\} &= P_i(\alpha, \beta).
\end{aligned}
\end{equation}
The variable $Z \in \{1, \ldots, k\}$ is a latent variable indicating the Markov chain label.
$\mu(i)$ describes the probability of observing each label $Z = i$.
The vectors $\bm{\nu}_i \in [0, 1]^s$ and matrices $\bm{P}_i \in [0, 1]^{s \times s}$ give the initialization probabilities and transition probabilities for the Markov chain with label $Z = i$.
The total number of free parameters in the model is $k s^2 - 1$, and typical parameter values are $k = 10^0$--$10^2$ and $s = 10^0$--$10^2$.
The Markov chain mixture model was presented in the papers \cite{smyth1997clustering,ramoni2002bayesian,batu2004inferring}, and it has been applied in areas including consumer behavior \cite{melnykov2016clickclust,zhou2021you}, biology \cite{das2023clustering, das2023utilizing}, psychology \cite{de2017use}, and economics \cite{frydman2005estimation}.

\section{Algorithms: Comparing Approaches for Markov Chain Mixtures} \label{sec:history}

This section discusses past algorithms for 
fitting the parameters in a Markov chain mixture model, and it emphasizes three main themes.
First, many researchers rely on the expectation-maximization (EM) algorithm for parameter inference (Sec.~\ref{sec:em}).
Second, researchers have identified the key importance of the trajectory length $T$ for ensuring classification accuracy (Sec.~\ref{sec:length}).
Third, this paper adapts the variational EM algorithm \cite[Sec.~10]{bishop2006pattern} instead of traditional EM since it learns the number of mixture components automatically (Sec.~\ref{sec:description}).

\subsection{The Expectation-Maximization Algorithm} \label{sec:em}

In several applications of Markov chain mixture modeling, the expectation-maximization (EM) algorithm gives fast and reliable parameter estimates \cite{smyth1997clustering,ramoni2002bayesian,melnykov2016clickclust,zhou2021you}, especially when it is initialized carefully \cite{kausik2023learning,GKV16,spaeh2023learning}.
\edit{In EM, each iteration alternates between (i) estimating conditional probabilities for the latent variables $\mathbb{P}\{Z^n=i\mid \bm{Y}^n\}$ and (ii) updating parameters $(\bm{\mu},\bm{\nu}_i,\bm{P}_i)$ using the expected counts of initial states and transitions under the latent variable distribution.
Because EM for finite-state Markov-chain mixtures is standard (see, e.g., \cite{dempster1977maximum,kausik2023learning}), this paper provides only a brief summary and focuses instead on the detailed exposition of variational EM in Sec.~\ref{sec:description}.}

As a limitation, EM is a nonconvex optimization algorithm that converges to local maxima in the likelihood landscape.
Therefore, the accuracy of the converged parameters depends on the initialization, which can be chosen either randomly \cite{melnykov2016clickclust,zhou2021you} or by a more structured approach \cite{smyth1997clustering,GKV16,spaeh2023learning,kausik2023learning}.
In a random initialization, the user can generate probability estimates $\hat{\PP}\{Z^n = i\}$ for $i = 1, \ldots, k$ that are uniformly distributed on the probability simplex.
A convenient procedure for generating uniformly distributed vectors on the simplex is to first produce $k$ independent exponential random variables and then normalize them to sum to one \cite{onn2011generating}.

As another limitation,
EM does not automatically determine the number of mixture components.
Rather, the number of components must be specified as input.
Many users fit several models with different numbers of components and then perform a comparison using AIC, BIC, or cross-validation \cite{keribin2000consistent,smyth1997clustering,melnykov2016clickclust}.
When the user searches over a space of $k$ different models,
this procedure can increase the runtime by a factor of $k$ or more.
Sec.~\ref{sec:description} will present an improvement of the EM algorithm called variational EM (VEM) that \emph{does} automatically determine the number of components.

\subsection{\texorpdfstring{The Trajectory Length $T$}{The Trajectory Length T}} \label{sec:length}

The second theme in the Markov chain mixture modeling literature is the close link between the classification accuracy and the trajectory length $T$.
Put simply, classifying short trajectories with the correct labels is \emph{hard}.
For example, Ramoni and coauthors \cite[Sec.~5.3.1]{ramoni2002bayesian} describe numerical experiments with increasing trajectory lengths that lead to steadily increasing accuracy.

Researchers have tackled the issue of trajectory length by focusing on the extremes of short trajectories ($T = 3$, \cite{GKV16,spaeh2023learning}) or long trajectories ($T \rightarrow \infty$, \cite{khaleghi2016consistent,fitzpatrick2022asymptotics,kausik2023learning}).
A couple papers propose structured initialization methods for EM based on short trajectories of length $T = 3$ \cite{GKV16,spaeh2023learning}, but these approaches require massive amounts of data, for example, $N = 10^8$ trajectories to achieve classification error $< 10\%$ \cite{GKV16}.
At the other extreme, algorithms can ensure perfect parameter identification in the limit $T \rightarrow \infty$ \cite{khaleghi2016consistent,kausik2023learning}.

Sec.~\ref{sec:analysis} provides further insight into the intermediate regimes where $T$ is neither short nor long, by bounding the classification error for any Markov chain mixture model with any trajectory length $T$.
\edit{Since the KL divergence grows roughly linearly with $T$, the lower bound on the classification error decays roughly exponentially in $T$, indicating the importance of long trajectories.}

\subsection{Variational EM Algorithm for Markov Chain Mixtures} \label{sec:description}

The current section describes the variational EM method for Markov chain mixtures in detail and outlines the steps for a user to apply the method to data.

As the first conceptual step, the user must reduce the data to a finite state space $\mathcal{Y} = \{1, \ldots, s\}$.
The states can be chosen based on \edit{domain expertise} or based on spectral clustering \cite{coifman2005geometric}, as is common in the Markov state modeling literature.
The typical number of states is $s = 10^0$--$10^2$.

Next, the user must choose the maximum number of mixture components, typically $k = 10^1$--$10^2$.
As long as $k \geq k_\true$, the approach is capable of identifying the true number of components.
However, there is motivation to not set $k$ too large, as the computational cost is proportional to $k$.

The data set is then described using the Bayesian mixture model:
\begin{equation}
\label{eq:new_model}
\begin{aligned}
    \bm{\mu} &\sim \operatorname{Dir}(\bm{1}_k / k), \\
    \bm{\nu}_i &\sim \operatorname{Dir}(\bm{1}_s), \\
    \bm{P}_i(\alpha\, \cdot) &\sim \operatorname{Dir}(\bm{1}_s ), \\    
    \mathbb{P}\{Z = i\} &= \mu(i), \\
    \mathbb{P}\{Y_0 = \alpha \,|\, Z = i\} &= \nu_i(\alpha), \\
    \mathbb{P}\{Y_t = \beta \,|\, 
    Z = i, \, Y_{t-1} = \alpha\} &= P_i(\alpha, \beta).
\end{aligned}
\end{equation}
\edit{This Bayesian formulation combines} standard Bayesian priors for Markov state modeling \cite{bacallado2009bayesian} and mixture modeling \cite{gormley2023model}.
Specifically, the model places uninformative $\operatorname{Dir}(\bm{1}_s)$ prior distributions on the parameters $\bm{\nu}_i$ and the rows of the matrix $\bm{P}_i(\alpha, \cdot)$, and it places a sparsity-promoting $\operatorname{Dir}(\bm{1}_k / k)$ prior distribution on the mixture parameter $\bm{\mu}$.
The $\operatorname{Dir}(\bm{1}_k / k)$  prior leads to a posterior distribution that automatically prunes extraneous mixture components by setting the corresponding mixture probabilities to $\mu(i) \approx 0$ \cite{ishwaran2001bayesian,rousseau2011asymptotic,malsiner2014model}.

Given the model \eqref{eq:new_model} and the observed trajectories, the Bayesian posterior distribution is hard to compute exactly.
However, the posterior can be efficiently approximated
by a factored distribution (see \cite[Sec.~10.2]{bishop2006pattern}):
\begin{equation*}
    \mathbb{P}(\{Z^n, \bm{\mu}, \bm{\nu}_i, \bm{P}_i\}) \approx Q_1(\{Z^n\}) \, Q_2(\{\bm{\mu}, \bm{\nu}_i, \bm{P}_i\}).
\end{equation*}
This factorization assumes the latent variables $\{Z^n\}$ are independent from the mixture parameters $\{\bm{\mu}, \bm{\nu}_i, \bm{P}_i \}$.
Given this factorization, the approximation minimizing the Kullback-Leibler divergence with respect to the posterior is a sequence of mutually independent random variables
\begin{equation}
\label{eq:the_posteriors}
\begin{aligned}
    \bm{\mu} &\sim \operatorname{Dir}(\bm{N}), \\
    \bm{\nu}_i &\sim \operatorname{Dir}(\bm{N}_i), \\
    \bm{P}_i(\alpha, \cdot) &\sim \operatorname{Dir}(\bm{N}_{i,\alpha}).
\end{aligned}
\end{equation}
with parameters that can be optimized using the variational EM algorithm in Alg.~\ref{alg:variational}.

\begin{algorithm}[p]
\caption{Variational EM Algorithm \label{alg:variational}}
\begin{algorithmic}[1]
\Require Trajectories $\bm{Y}^n$, estimates $\hat{\PP}\{Z^n = i\}$.
\Ensure Refined estimates $\hat{\bm{N}}, \hat{\bm{N}}_i, \hat{\bm{N}}_{i,\alpha}, \hat{Z}^n$.
\State
For each $n = 1, \ldots, N$, form
\begin{align*}
    U^n(\alpha) &= \mathds{1}\{Y_0^n = \alpha\}, \\
    V^n(\alpha, \beta) &= \sum_{t = 0}^{T-1} \mathds{1}\{Y_t^n = \alpha, \, Y_{t+1}^n = \beta\}.
\end{align*}
\For{$t = 1, 2, \ldots$}
\State Compute Dirichlet parameters
\begin{align*}
    \hat{N}(i) 
    &= \frac{1}{k} + \sum_{n=1}^N \hat{\PP}\{Z^n = i\}, \\
    \hat{N}_i (\alpha) 
    &= 1 + \sum_{n=1}^N \hat{\PP}\{Z^n = i\} \,U^n(\alpha), \\
    \hat{N}_{i, \alpha} (\beta) 
    &= 1 + \sum_{n=1}^N\, \hat{\PP}\{Z^n = i\} V^n(\alpha, \beta).
\end{align*}
\State Compute vectors $\hat{\bm{\mu}}$, $\hat{\bm{\nu}}_i$, $\hat{\bm{P}}_i(\alpha, \cdot)$ via
\begin{align*}
    \log \hat{\mu}(i)
    &= \psi(\hat{N}(i)) - \psi\Bigl(\sum_{j=1}^k \hat{N}(j) \Bigr), \\
    \log \hat{\nu}_i(\alpha) 
    &= \psi(\hat{N}_i(\alpha)) - \psi \Bigl(\sum_{\beta=1}^s \hat{N}_i(\beta)\Bigr), \\
    \log \hat{P}_i(\alpha, \beta) 
    &= \psi(\hat{N}_{i, \alpha}(\beta)) - \psi \Bigl(\sum_{\gamma=1}^s \hat{N}_{i, \alpha}(\gamma) \Bigr).
\end{align*}
$\psi(x) = \frac{d}{dx} \log  \Gamma(x)$ is the digamma function.
\State Update probabilities $\hat{\PP}\{Z^n = i\}$ via
\begin{equation*}
    \frac{\hat{\mu}(i)}{C_n}
    \prod_{\alpha = 1}^s \hat{\nu}_i(\alpha)^{U^n(\alpha)} 
    \prod_{\alpha, \beta = 1}^s \hat{P}_i(\alpha, \beta)^{V^n(\alpha, \beta)},
\end{equation*}
and choose $C_n$ so that $\sum_{1 \leq i \leq k} \hat{\mathbb{P}}\{Z^n = i\} = 1$.
\State Calculate variational lower bound
\begin{align*}
    \mathcal{L} 
    &= \sum_{n=1}^N \log C_n + \log \biggl[\frac{B(\bm{1}_k/k)}{B(\hat{\bm{N}})} \prod_{i=1}^k \hat{\mu}_i^{\hat{N}(i) - 1/k} \biggr] \\
    &+ \sum_{i=1}^k \log \biggl[\frac{B(\bm{1}_s)}{B(\hat{\bm{N}}_i)} \prod_{\alpha=1}^s \edit{\hat{\nu}_i(\alpha)}^{\hat{N}_i(\alpha) - 1} \biggr] \\
    &+ \sum_{i=1}^k \sum_{\alpha=1}^s
    \log \biggl[\frac{B(\bm{1}_s)}{B(\hat{\bm{N}}_{i,\alpha})}
    \prod_{\beta=1}^s \edit{\hat{P}_i(\alpha, \beta)}^{\hat{N}_{i, \alpha}(\beta) - 1}\Biggr],
\end{align*}
where $B(\bm{N}) = (\prod_i \Gamma(N_i)) \,\slash\, \Gamma (\sum_i N_i)$.
\State \textbf{Break} if $|\Delta \mathcal{L}|< 10^{-12} \cdot N T$.
\EndFor
\State Return $\hat{\bm{N}}, \hat{\bm{N}}_i, \hat{\bm{N}}_{i,\alpha}, \hat{Z}^n = \operatorname*{argmax}\limits_{1 \leq i \leq k} \hat{\mathbb{P}}\{Z^n = i\}$.
\end{algorithmic}
\end{algorithm}

Examining Alg.~\ref{alg:variational}, there are a couple of key changes from the standard EM algorithm (Sec.~\ref{sec:em}).
First, the parameters $\{\bm{\mu}, \bm{\nu}_i, \bm{P}_i\}$ are described by Dirichlet distributions, which encode point estimates and uncertainty intervals.
The user can extract a point estimate for any parameter by using the expected value formula for the Dirichlet distribution, e.g.,
\begin{equation}
\label{eq:point}
    \hat{\mu}(i) = \mathbb{E}[\mu(i)] = \frac{N(i)}{\sum_{j=1}^k N(j)}.
\end{equation}
The user can extract uncertainty intervals by using the variance formula for the Dirichlet distribution, e.g.,
\begin{equation*}
    \operatorname{Var}[\mu(i)] = \frac{N(i) \sum_{j \neq i} N(j)}{\bigl(\sum_{j=1}^k N(j)\bigr)^2 \bigl(\sum_{j=1}^k N(j) + 1\bigr)}.
\end{equation*}
The automatic availability of uncertainty information is a helpful feature of the Bayesian approach.

As the second change, variational EM automatically determines the number of mixture components.
Upon convergence, only a small number of mixture parameters satisfy $\hat{N}(i) \geq 1$.
The optimization drives all the remaining mixture parameters to the lowest possible value $\hat{N}(i) = 1/k$.
These components are not represented by any labels $\hat{Z}^n = i$, so they are effectively removed from the model.
The tendency of variational EM to identify a parsimonious mixture model is called \emph{automatic
relevance determination}.
It was observed in the papers \cite{corduneanu2001hyperparameters,mackay2001local} and later justified based on asymptotic arguments in \cite{rousseau2011asymptotic}.

To apply Alg.~\ref{alg:variational} in practice, the user needs some way to initialize the parameter estimates.
Since there is limited theoretical understanding of the best way to initialize variational EM,
this paper randomly initializes parameters $\hat{\PP}\{Z^n = i\}$ for $i = 1, \ldots, k$ from the probability simplex.
After running Alg.~\ref{alg:variational} with $10^2$ independent initializations, the paper selects the optimal parameters $\{\hat{\bm{N}}, \hat{\bm{N}}_i, \hat{\bm{N}}_{i,\alpha}, \hat{Z}^n\}$ that maximize the likelihood lower bound $\mathcal{L}$.

\section{Theory: Limits on Markov Chain Mixture Modeling} \label{sec:analysis}

Markov chain mixture modeling cannot succeed when the trajectories are too short.
This section provides a theoretical explanation by proving and interpreting Theorem.~\ref{thm:1}.

\begin{proof}[Proof of Thm.~\ref{thm:1}]
    Given the observed data $\bm{Y}$, the conditional distribution of the label $Z$ is
    \begin{equation}
    \label{eq:use_me}
        \mathbb{P}\{Z = i \,|\, \bm{Y}\} = \frac{\mu(i) \mathbb{P}_i(\bm{Y})}{\sum_{j=1}^k \mu(j) \mathbb{P}_j(\bm{Y})}
    \end{equation}
    for each $i = 1, \ldots, k$.
    The optimal classification algorithm selects the labels according to
    \begin{equation*}
        \hat{Z} \in \operatorname*{argmax}_{1 \leq i \leq k} \mathbb{P}\{Z = i \,|\, \bm{Y}\},
    \end{equation*}
    and the resulting statistical error is quantified by
    \begin{equation*}
        \mathbb{P}\{Z \neq \hat{Z} \,|\, \bm{Y}\} 
        = 1 - \max_{1 \leq i \leq k} \mathbb{P}\{Z = i \,|\, \bm{Y}\}.
    \end{equation*}
    Next, use the inequality $1 - \alpha \geq (1 - \alpha^2)/2$ to obtain the lower bound
    \begin{align*}
        & \mathbb{P}\{Z \neq \hat{Z} \,|\, \bm{Y}\} \\
        &= 1 - \max_{1 \leq i \leq k} \mathbb{P}\{Z = i \,|\, \bm{Y}\} \\
        &\geq \frac{1}{2} - \frac{1}{2} \max_{1 \leq i \leq k} \mathbb{P}\{Z = i \,|\, \bm{Y}\}^2 \\
        &\geq \frac{1}{2} - \frac{1}{2} \sum_{i=1}^k \mathbb{P}\{Z = i \,|\, \bm{Y}\}^2 \\
        &= \frac{1}{2} \sum_{i=1}^k 
        \mathbb{P}\{Z = i \,|\, \bm{Y}\}
        (1 - \mathbb{P}\{Z = i \,|\, \bm{Y}\} ).
    \end{align*}
    Multiply both sides by $\mathbb{P}(\bm{Y})$ and sum over all possible trajectories $\bm{Y}$ to obtain the formula
    \begin{align*}
        & \mathbb{P}\{Z \neq \hat{Z}\} 
        = \sum_{\bm{Y}} \mathbb{P}\{Z \neq \hat{Z} \,|\, \bm{Y}\}\,\mathbb{P}(\bm{Y}) \\
        &\geq \frac{1}{2} \sum_{i, \bm{Y}} \mathbb{P}\{Z = i, \bm{Y}\} (1 - \mathbb{P}\{Z = i \,|\, \bm{Y}\} ) \\
        &= \frac{1}{2} \sum_{i=1}^k \mu(i) \frac{\sum_{\bm{Y}} \mathbb{P}\{Z = i, \bm{Y}\} (1 - \mathbb{P}\{Z = i \,|\, \bm{Y}\})}{\mu(i)}.
    \end{align*}
    The latter quantity is a conditional expectation
    \begin{equation}
        \label{eq:formula}
        \begin{aligned}
        & \mathbb{P}\{Z \neq \hat{Z}\} \\
        &\geq \frac{1}{2} \sum_{i=1}^k \mu(i) \,\mathbb{E}\bigl[1 - \mathbb{P}\{Z = i \,|\, \bm{Y}\} \,|\, Z = i\bigr]. 
    \end{aligned}
    \end{equation}
    The conditional expectation turns out to be analytically tractable and allows us to derive the bound directly.

    The rest of the proof is based on bounding $\mathbb{E}\bigl[1 - \mathbb{P}\{Z = i \,|\, \bm{Y}\} \,|\, Z = i\bigr]$ from below using the Kullback-Leibler divergence.
    To obtain a convenient lower bound, use eq.~\eqref{eq:use_me} to derive
    \begin{align*}
        & \mathbb{E}\bigl[1 - \mathbb{P}\{Z = i \,|\, \bm{Y}\} \,|\, Z = i\bigr] \\
        &= \mathbb{E}\Biggl[ \frac{\sum_{j \neq i} \mu(j) \mathbb{P}_j(\bm{Y})}{\sum_{j=1}^k \mu(j) \mathbb{P}_j(\bm{Y})} \,\bigg|\, Z = i\Biggr] \\
        &\geq \mathbb{E}\biggl[ \frac{\mu(j) \mathbb{P}_j(\bm{Y})}{\mu(i) \mathbb{P}_i(\bm{Y}) + \mu(j) \mathbb{P}_j(\bm{Y})} \,\bigg|\, Z = i\biggr],
    \end{align*}
    for each $j \neq i$.
    The inequality is due to the fact that $f(x) = x / (c + x)$ is an increasing function of $x > 0$, for any $c > 0$.
    Next, by an application of Jensen's inequality,
    \begin{align*}
        & \log\biggl(\mathbb{E} \biggl[ \frac{\mu(j) \mathbb{P}_j(\bm{Y})}{\mu(i) \mathbb{P}_i(\bm{Y}) + \mu(j) \mathbb{P}_j(\bm{Y})} \,\bigg|\, Z = i\biggr]\biggr) \\
        &\geq \mathbb{E} \biggl[ \log\biggl(\frac{\mu(j) \mathbb{P}_j(\bm{Y})}{\mu(i) \mathbb{P}_i(\bm{Y}) + \mu(j) \mathbb{P}_j(\bm{Y})}\biggr) \,\bigg|\, Z = i \biggr] \\
        &\begin{aligned}
        &= \mathbb{E}\biggl[\log\biggl( \frac{\mathbb{P}_j(\bm{Y})}{\mathbb{P}_i(\bm{Y})}\biggr) \,\bigg|\, Z = i\biggr] \\ 
        &\quad - \mathbb{E}\biggl[\log\biggl(\frac{\mu(i)}{\mu(j)} + \frac{\mathbb{P}_j(\bm{Y})}{\mathbb{P}_i(\bm{Y})} \biggr) \,\bigg|\, Z = i\biggr].
        \end{aligned}
    \end{align*}
    The first term in the final expression is the negative Kullback-Leibler divergence
    \begin{equation*}
        \mathbb{E}\biggl[\log\biggl( \frac{\mathbb{P}_j(\bm{Y})}{\mathbb{P}_i(\bm{Y})}\biggr) \,\bigg|\, Z = i\biggr] = -\operatorname{D}_{\rm KL}(\mathbb{P}_i \,\Vert\, \mathbb{P}_j).
    \end{equation*}
    The second term can be bounded by another application of Jensen's inequality:
    \begin{align*}
        & -\mathbb{E}\biggl[\log\biggl(\frac{\mu(i)}{\mu(j)} + \frac{\mathbb{P}_j(\bm{Y})}{\mathbb{P}_i(\bm{Y})} \biggr) \,\bigg|\, Z = i\biggr] \\
        &\geq -\log\biggl(\frac{\mu(i)}{\mu(j)} + 
        \mathbb{E}\biggl[ \frac{\mathbb{P}_j(\bm{Y})}{\mathbb{P}_i(\bm{Y})} \,\bigg|\, Z = i\biggr] \biggr).
    \end{align*}
    Moreover, observe that
    \begin{align*}
        \mathbb{E}\biggl[ \frac{\mathbb{P}_j(\bm{Y})}{\mathbb{P}_i(\bm{Y})} \,\bigg|\, Z = i\biggr]
        &= \sum_{\bm{Y}} \mathbb{P}_i(\bm{Y}) \frac{\mathbb{P}_j(\bm{Y})}{\mathbb{P}_i(\bm{Y})} \\
        &= \sum_{\bm{Y}} \mathbb{P}_j(\bm{Y}) = 1.
    \end{align*}
    Tying together the above expressions, it follows
    \begin{equation*}
    \begin{aligned}
        & \log\bigl(\mathbb{E}\bigl[1 - \mathbb{P}\{Z = i \,|\, \bm{Y}\} \,|\, Z = i\bigr] \bigr) \\
        &\geq -\operatorname{D}_{\rm KL}(\mathbb{P}_i \,\Vert\, \mathbb{P}_j) - \log\biggl(\frac{\mu(i)}{\mu(j)} + 
        1 \biggr)
    \end{aligned}        
    \end{equation*}
    and therefore
    \begin{equation*}
        \mathbb{E}\bigl[1 - \mathbb{P}\{Z = i \,|\, \bm{Y}\} \,|\, Z = i\bigr]
        \geq \max_{j \neq i} \frac{{\rm e}^{-\operatorname{D}_{\rm KL}(\mathbb{P}_i \,\Vert\, \mathbb{P}_j)}}{\frac{\mu(i)}{\mu(j)} + 1}.
    \end{equation*}
    Combine with eq.~\eqref{eq:formula} to conclude that the classification error is at least
    \begin{equation*}
        \mathbb{P}\{\hat{Z} \neq Z\}
        \geq \frac{1}{2} \sum_{i=1}^k \max_{j \neq i} \frac{{\rm e}^{-\operatorname{D}_{\rm KL}(\mathbb{P}_i \,\Vert\, \mathbb{P}_j)}}{\mu(i)^{-1} + \mu(j)^{-1}}.
    \end{equation*}
    This completes the proof.
\end{proof}

\edit{Two points are worth emphasizing.
First, Theorem~\ref{thm:1} is non-asymptotic: it applies to any finite trajectory length $T$ without requiring mixing or stationarity.
It thus provides a baseline limit in the ``many short trajectories'' regime, isolating an intrinsic identifiability barrier. 

Second, the $e^{-\operatorname{D}_{\rm KL}}$ scaling in the error bound reflects a classical connection between classification error and information divergences.
The optimal error for distinguishing $\mathbb{P}_i$ from $\mathbb{P}_j$ is
\begin{equation*}
    \sum_{\bm{Y}} \min\{\bm{\mathbb{P}}_i(\bm{Y}), \bm{\mathbb{P}}_j(\bm{Y})\} = 1 - d_{\rm TV}(\mathbb{P}_i, \mathbb{P}_j),
\end{equation*}
where the total variation distance is
\begin{equation*}
    d_{\rm TV}(\mathbb{P}_i, \mathbb{P}_j) = \tfrac{1}{2} \sum_{\bm{Y}} |\mathbb{P}_i(\bm{Y}) - \mathbb{P}_j(\bm{Y})|.
\end{equation*}
The Bretagnolle--Huber inequality~\cite{bretagnolle1979estimation} then bounds 
\begin{equation*}
1 - d_{\rm TV}(\mathbb{P}_i, \mathbb{P}_j) \geq \tfrac{1}{2} e^{-\operatorname{D}_{\rm KL}(\mathbb{P}_i \,\Vert\, \mathbb{P}_j)}
\end{equation*}
Thm.~\ref{thm:1} derives a comparable bound for the Markov mixture, making explicit how the KL divergence controls the overall classification error.
This result complements recent work by Lee et al.~\cite{lee2025near}, who prove high-probability error bounds and also find that KL divergence governs the clustering error.}

To interpret the theorem, it is instructive to consider what happens to the KL divergence in the specific case of a Markov chain mixture as the trajectory length $T$ approaches infinity.
If the Markov chain $i$ has a unique stationary measure $\bm{\pi}_i \in [0, 1]^s$, then in the limit $T \rightarrow \infty$ the ergodic theorem for Markov chains \cite[Sec.~1.10]{norris1997markov} guarantees
\begin{align*}
    &\tfrac{1}{T} \operatorname{D}_{\rm KL}(\mathbb{P}_i \,\Vert\, \mathbb{P}_j) \\
    &= \frac{1}{T} \sum_{\alpha_0, \ldots, \alpha_T} \mathbb{P}_i(\alpha_0, \ldots, \alpha_T) \log\biggl(\frac{\mathbb{P}_i(\alpha_0, \ldots, \alpha_T)}{\mathbb{P}_j(\alpha_0, \ldots, \alpha_T)}\biggr) \\
    &\rightarrow \sum_{\alpha, \beta = 1}^s \edit{\pi_i(\alpha)}
    \edit{P_i(\alpha, \beta)} \log\biggl(\frac{\edit{P_i(\alpha, \beta)}}{\edit{P_j(\alpha, \beta)}}\biggr) \\
    &= \sum_{\alpha=1}^s \edit{\pi_i(\alpha)} \operatorname{D}_{\rm KL}(\edit{P_i(\alpha, \cdot)} \,\Vert\, \edit{P_j(\alpha, \cdot)}).
\end{align*}
By this derivation, the KL divergence $\operatorname{D}_{\rm KL}(\mathbb{P}_i \,\Vert\, \mathbb{P}_j)$ is approximately $T$ times the averaged Kullback-Leibler divergence $\operatorname{D}_{\rm KL}(\edit{P_i(\alpha, \cdot)} \,\Vert\, \edit{P_j(\alpha, \cdot)})$ over all the states.
\edit{The linear dependence of the Kullback-Leibler divergence on $T$ has a practical implication: longer trajectories help substantially. Doubling $T$ can turn a hard problem into an easy one, for example reducing error from $10\%$ to $1\%$.}

\begin{figure*}[t]
    \centering
    \includegraphics[width=0.7\linewidth]{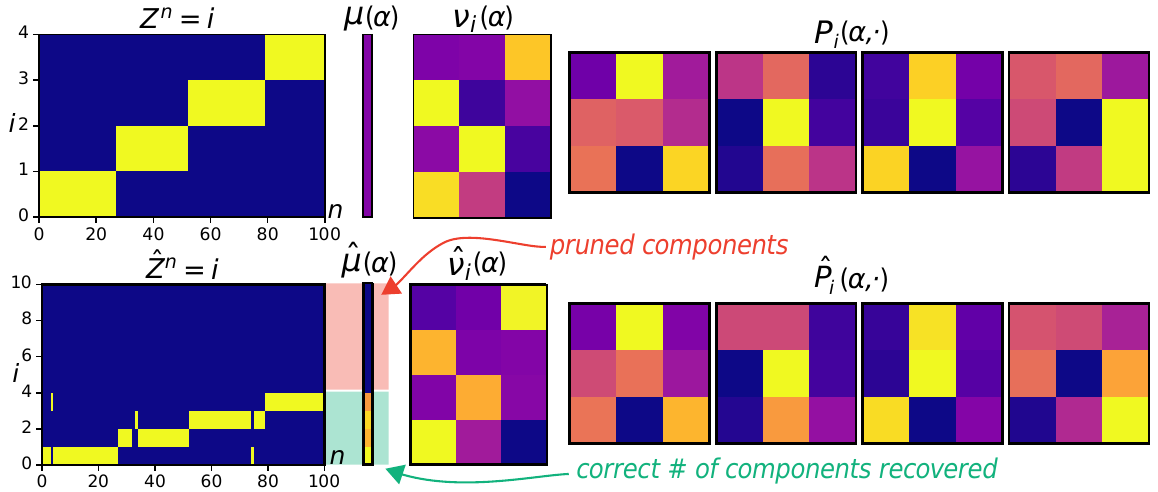}
    \caption{True parameters and parameter estimates for synthetic trajectories with $k_\true=4$, $s=3$, $N=100$, and $T=30$. 
    Top row shows true parameters.
    Bottom row shows parameter estimates from variational EM with a maximum of $k=10$ components. 
    Even with limited data, VEM correctly identifies the number of components.}
    \label{fig:fig_ex}
\end{figure*}

\section{Numerical Experiments} \label{sec:experiments}

This section describes four numerical experiments with the variational EM algorithm.
The first experiment (Sec.~\ref{sec:synthetic}) uses synthetic Markov chain data.
The middle two experiments (Secs.~\ref{sec:lastfm} and ~\ref{sec:ultrarunners}) use real-world observations.
The last experiment (Sec.~\ref{sec:gene}) uses synthetic genetics data.
Code to run the experiments is available at \url{https://github.com/chris-miles/MarkovChain-VEM}.

\subsection{Synthetic Markov Chain Data} \label{sec:synthetic}

The first experiment is based on simulations of the Markov chain mixture model \eqref{eq:the_model} with uniformly random parameters $\bm{\mu}$, $\bm{\nu}_i$, and $\bm{P}_i$.
The goal of the experiment is to test whether variational EM correctly identifies the number of components and correctly classifies trajectories into components.

As the first result, variational EM accurately identifies the number of components even with a relatively small data set.
\fig\ref{fig:fig_ex} shows the true parameters (top) and parameter estimates (bottom) when $k_\true=4$, $s=3$, $N=100$, and $T=30$.
Although variational EM is run with a maximum number of $k = 10$ components, it leads to a fitted model with just 4 components.
The algorithm assigns no trajectories to 6 of the 10 components, thereby pruning them from the model.
The automatic identification of the number of mixture components is a major advantage compared to previous Markov chain mixture modeling approaches \cite{keribin2000consistent,smyth1997clustering,melnykov2016clickclust}.

The reader may wonder, is there a cost to setting the value of $k$ excessively high?
While there is an obvious computational burden to setting $k>k_{\true}$, 
the results in \fig~\ref{fig:accuracy} show that the classification accuracy, defined by
\begin{equation}
\label{eq:acc}
    \operatorname{acc} = \frac{1}{N} \sum_{n=1}^N \mathds{1}\{Z^n = \hat{Z}^n\},
\end{equation}
is not hindered. 
Indeed, the results with $k = 4 = k_\true$ (top panel) are nearly identical to results with $k = 10 > k_\true$ (bottom panel).

\begin{figure}[t]
    \centering
    \includegraphics[width=0.8\linewidth]{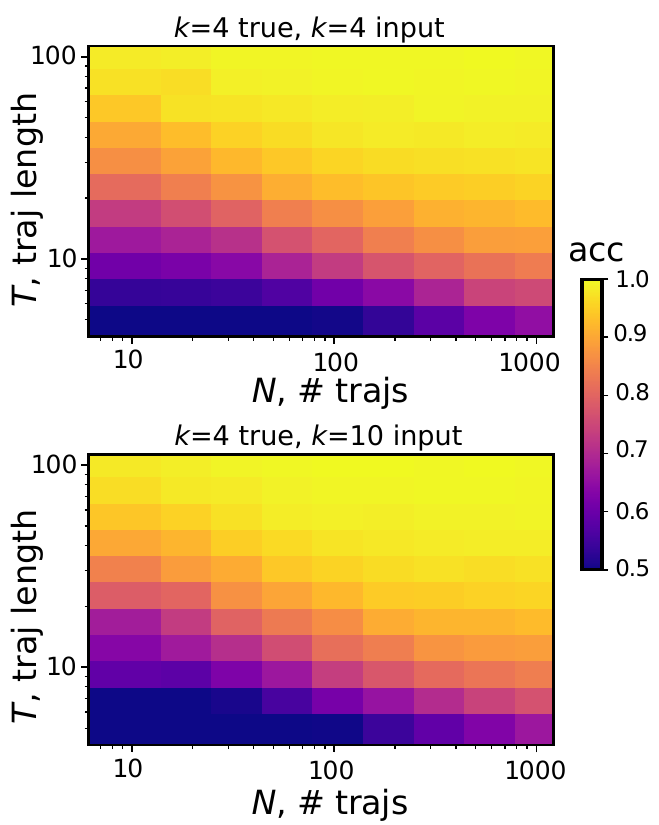}
    \caption{Classification accuracy with $k_\true=4$, $s=3$, and varying settings of $N$ and $T$.
    The accuracy is similar using $k = 4$ (top) versus $k = 10$ (bottom).
    In both plots, the accuracy is averaged over $250$ independent trials.}
    \label{fig:accuracy}
\end{figure}

As the next result,
\fig\ref{fig:accuracy} shows the trajectory length $T$ is a major factor determining the classification accuracy \eqref{eq:acc}.
The accuracy reaches a threshold for each $T$ value as $N \rightarrow \infty$, and increasing $T$ exponentially increases the threshold.
These results in line with the theoretical analysis in Sec.~\ref{sec:analysis}, which also suggests an exponential dependence on $T$.

\begin{figure}[t]
    \centering
\includegraphics[width=0.9\linewidth]{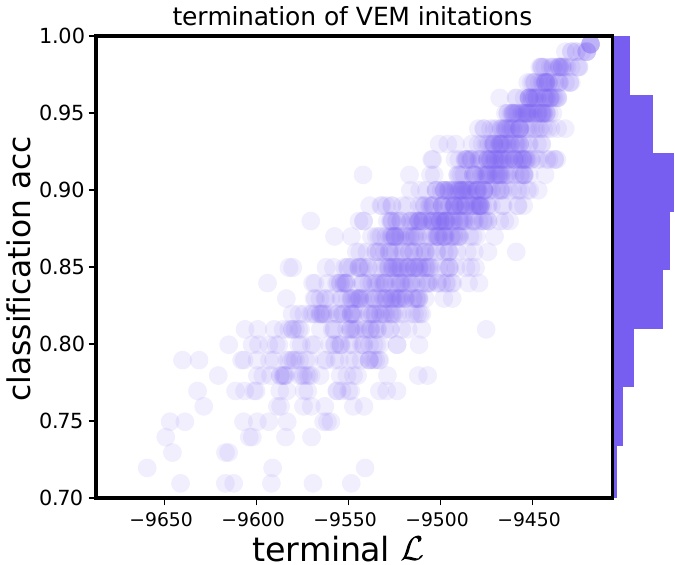}
    \caption{Demonstration of the local optima over $1000$ random variational EM initializations when $k = 15$, $k_\true = 10$, $s=7$, $N=100$, and $T=50$. 
    Many variational EM runs end with nearly optimal assignments, some do not.}
    \label{fig:EMlocalmin}
\end{figure}

EM and variational EM are known to terminate in locally optimal parameter values that can be far from the global optimum \cite{redner1984mixture}.
To safeguard against misconvergence, this paper randomly initializes variational EM many times and accepts the results that maximize the likelihood bound $\mathcal{L}$.
The outcomes of $10^3$ random initialization are shown in
\fig\ref{fig:EMlocalmin}.
For the figure, the parameters $k = 15$, $k_\true = 10$, $s=7$, $N=100$, and $T=50$ were deliberately chosen to make the optimization challenging and produce a range of variational EM solutions.
Perfect accuracy is achieved in some variational EM runs, but other runs result in low accuracy and low $\mathcal{L}$ values.
While the sensitivity of variational EM to the initial parameters may appear prohibitive, given the speed of the algorithm and its ease of parallelization, running hundreds to thousands of independent initializations is feasible and sufficient to yield reliable results.
Nonetheless, there remains significant future interest in identifying more principled initializations for variational EM or more robust stochastic variants \cite{celeux1996stochastic}. 

\subsection{\texttt{Last.fm} User Data}
\label{sec:lastfm}

The next experiment uses a publicly available data set that records the listening histories of \texttt{Last.fm} users in 2007.
\edit{The works \cite{kausik2023learning,GKV16} report trajectory-level clustering results on this \texttt{Last.fm} Markov-chain-mixture benchmark, making them natural baselines.}

Following the previous work \cite{kausik2023learning}, the ground-truth model is specified as follows.
The data contains $k_\true = 10$ Markov chains, corresponding to the \texttt{Last.fm} users with the greatest number of song listens.
The data is coded using $s=100$ states based on assigning a discrete label to each song in a user's listening history by the predominant genre.
If a user sequentially listens to songs from the same genre, these songs are collapsed into a single state. 
Hence, each Markov chain models the \emph{genre transitions} for a \texttt{Last.fm} user.
To make the learning problem difficult, each user's listening history is broken into $75$ equal-length segments, and the segments are truncated to a short trajectory length $20 \leq T \leq 100$.

\begin{figure}[t]
    \centering
    \includegraphics[width=0.85\linewidth]{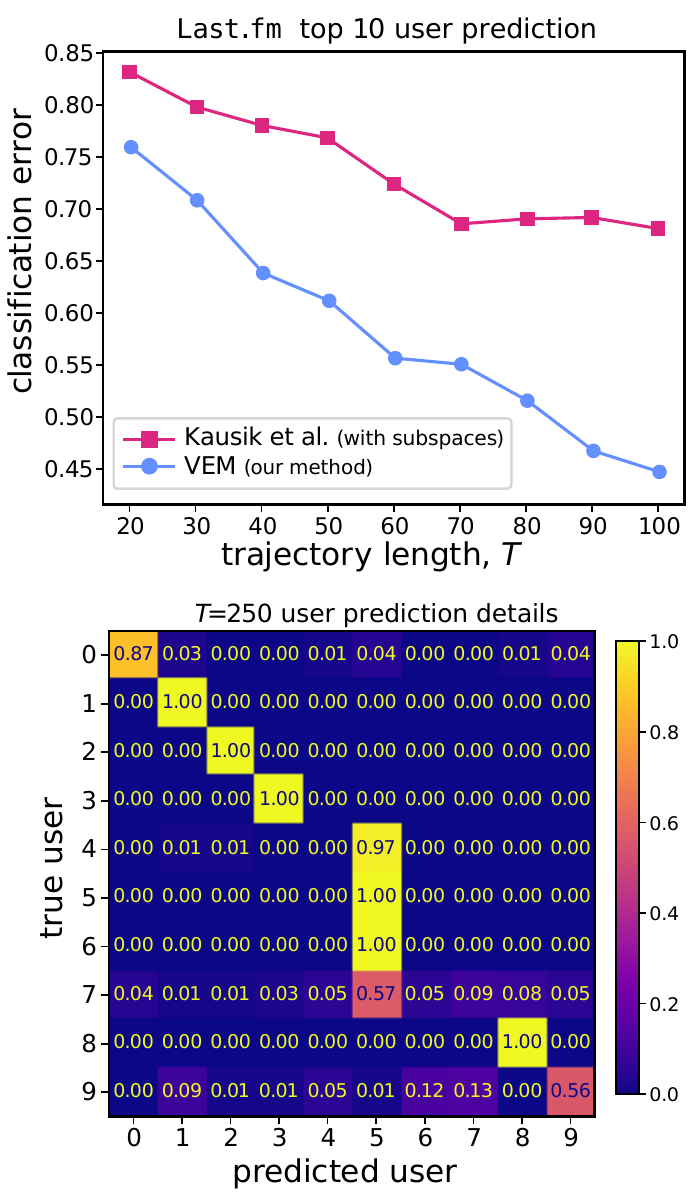}
    \caption{Classification accuracy for the top 10 \texttt{Last.fm} users, given $N=75$ trajectories and $s=100$ possible genres for each user.
    Top: Variational EM is twice as accurate as the best method reported in \cite{kausik2023learning}.
    Bottom: Confusion matrix for identifying the top 10 users with $T=250$.}
    \label{fig:lastfm1}
\end{figure}  

\edit{\fig\ref{fig:lastfm1} shows that the Bayesian variational EM approach with $k = k_\true = 10$ components achieves classification accuracy approximately $2\times$ higher than the experiments in \cite{kausik2023learning}.
The error is also much lower than the $90\%$ classification error in \cite{GKV16}.
These results are included for context, although the earliest benchmark \cite{GKV16} used more numerous and shorter trajectories ($N = 30,000$, $T = 3$), so the numbers are not directly comparable.}

While variational EM outperforms previous approaches, the misclassification rate for length $T=100$ trajectories is still notably high, $45\%$. 
To further improve the accuracy,
the trajectories were extended to length $T = 250$, resulting in a reduction to $30\%$ error.
The right panel of \fig\ref{fig:lastfm1}
shows the confusion matrix, which indicates that most of the misclassifications come from 4 users who all had similar listening histories.

\begin{figure}[t]
    \centering
    \includegraphics[width=\linewidth]{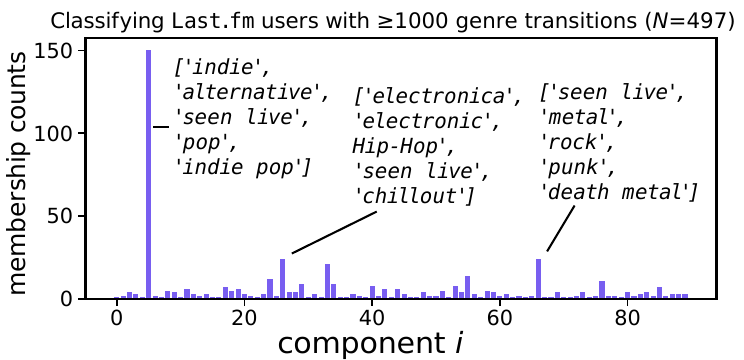}
    \caption{Classification of users for a \texttt{Last.fm} data set with $N = 497$ users, $s = 100$ genres, and trajectory length $T = 1000$. Popular components are labeled with the most frequent genres.}
    \label{fig:lastfm2}
\end{figure}

For further insight, a new data set was prepared with a length $T = 1000$ trajectory for each of the top $N = 497$ \texttt{Last.fm} users.
Variational EM was applied to this data set with a maximum number of $k=100$ components.
\fig\ref{fig:lastfm2} shows the mixture modeling results, which indicate that a few components have substantially large memberships.
The 3 largest components correspond to well-known genres of `indie', `electronic', and `metal'.
The overwhelming popularity of the indie component explains why distinguishing \texttt{Last.fm} users may be difficult.
Many indie rock listeners had quite similar listening histories in 2007 \cite{eck2007automatic}.

\subsection{Ultrarunners data set}
\label{sec:ultrarunners}

The third experiment is based on data from the 2012 International Association of Ultrarunners (IAU) World Championship held in Katowice, Poland \cite{bartolucci2015finite,roick2020clustering}.
The goal of this experiment is to find the predominant pacing patterns in the data and infer which pacing patterns lead to the best overall performance.
Conventional wisdom among ultrarunners contends that a slower pace to start is ideal \cite{berger2024limits}, as many runners tend to overexert themselves during the start and tire later during a race.

Pacing is an individualized measure of performance, so each runner's data over the 24-hour event was normalized based on average speed.
First, $12$ runners that did not complete a single lap were removed from the data set, yielding $N = 248$ total trajectories.
Then, each runner's average speed was calculated by dividing the total number of laps over the 2.314 km course by the total number of hours until finishing or dropping out of the event (typically 24 hours).
Each runner's data was normalized by dividing the number of laps per hour by the average speed.
Last, the data was categorized into $s = 4$ Markov states: ``resting'' when the hourly speed is less than $0.5$ times the average speed, 
``normal'' when the hourly speed is between $0.5$ and $1.5$ times the average speed, 
``strained'' when the hourly speed is greater than $1.5$ times the average speed,
and a terminal ``ended'' state.

\begin{figure}[t]
    \centering
    \includegraphics[width=\linewidth]{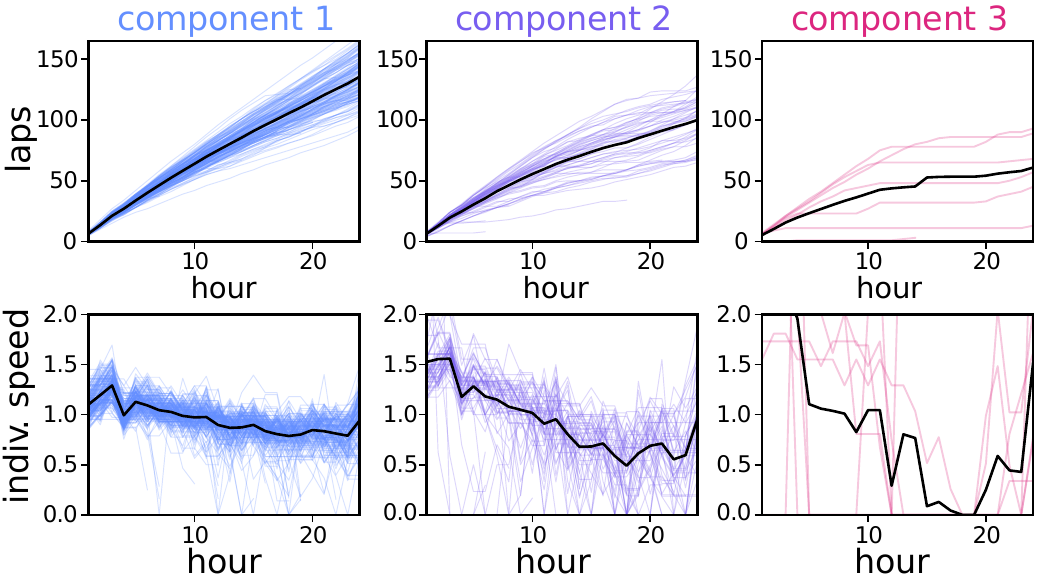}
    \caption{Application of Markov chain mixture modeling to the 2012 International Association of Ultrarunners (IAU) World Championship data.
    With a maximum of $k=10$ components, three running patterns emerged: (1) constant pace, (2) a fast start followed by subsequent resting, (3) erratic performance. 
    Black lines show component means.}
    \label{fig:ultra}
\end{figure}

The variational EM algorithm identifies three distinct running patterns in the data, as displayed in \fig~\ref{fig:ultra}.
The predominant pattern ($70\%$ of runners) involves a roughly constant pace, with slightly higher relative speed at the start of the race. 
The second-most-common pattern ($25\%$ of runners) involves overexertion at the start of the race and a significant slowdown later. 
The third pattern involves erratic running with no apparent strategy.
A previous analysis identified similar running patterns in the 2012 ultrarunner data \cite{bartolucci2015finite}.

The Markov state mixture model is interesting in two ways.
First, the runners in group 1 ran notably faster than those in group 2 even though the average speed was not included as input into the algorithm.
Rather, a constant pace emerged organically as the top-performing pattern.
Second, the ideal strategy of a slower opening pace was not observed in any group.
While the slow start is considered optimal, it is also notably difficult to achieve in races \cite{berger2024limits}. 

\subsection{Synthetic Gene Expression Data}
\label{sec:gene}

\begin{figure*}
    \centering
    \includegraphics[width=0.8\textwidth]{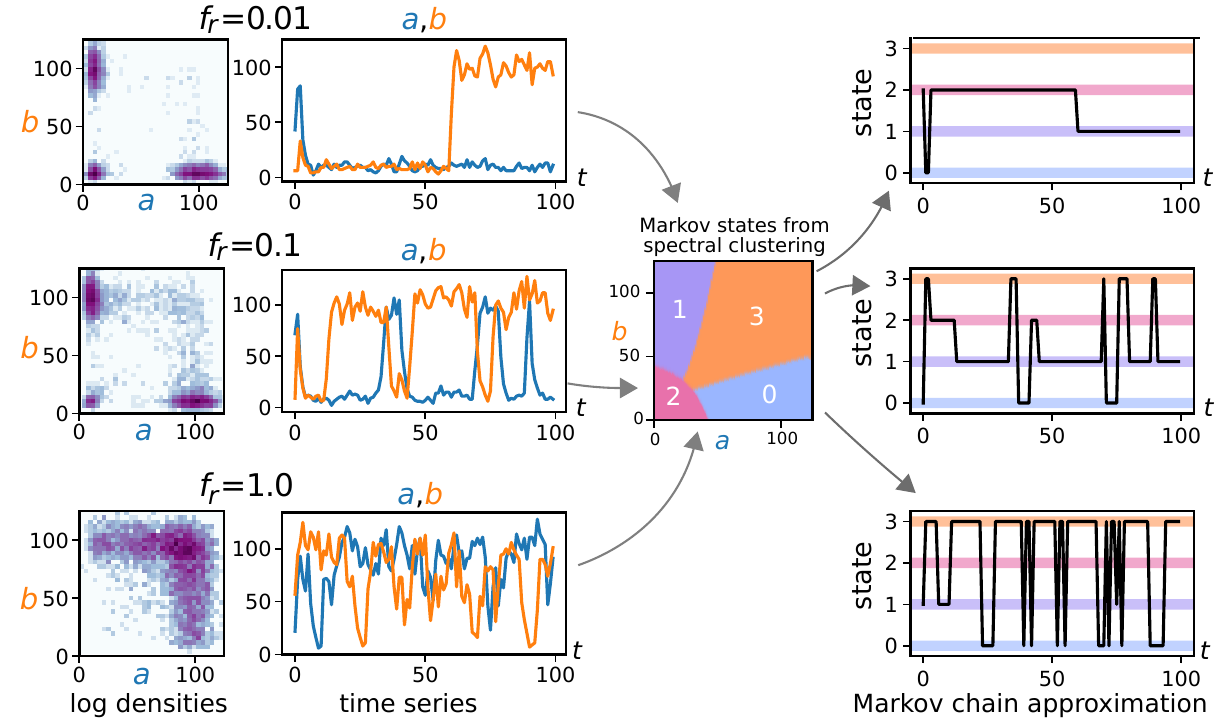}
    \caption{Schematic of finite state approximation for the mutual-inhibition, self-activation (MISA) gene circuit. 
    Counts of proteins $a$ and $b$ are visualized as scatter plots (column 1) and time series (column 2).
    $(a, b)$ trajectories are classified into 4 states (column 3),
    and trajectories are projected based on the 4 states (column 4).
    Rows indicate different magnitudes of the rate parameter $f_r = 0.01, 0.1, 1.0$.}
    \label{fig:MISA-setup}
\end{figure*}

The last experiment is based on synthetic data generated from the mutual-inhibition, self-activation (MISA) gene circuit. 
The MISA circuit has been observed in many biological systems \cite{graf2009forcing,huang2009reprogramming,zhou2011understanding, smith2016regulation}, and it has been studied extensively by theorists \cite{schultz2008extinction,morelli2008reaction,feng2012new,gallivan2020analysis}.
It consists of an $A$ gene and a $B$ gene that inhibit each other and activate themselves through the production of $a$ and $b$ proteins.
Mathematically, it can be modeled as a chemical reaction network \cite{anderson2015stochastic} containing the following species and reactions.
There is one $A$ gene and one $B$ gene that exhibit conditions $ij \in \{00, 01, 10, 11\}$, where $i = 1$ indicates the presence of an activator and $j = 1$ indicates the presence of a repressor.
The A and B genes produce proteins at a rate $g_{ij}$:
\begin{align*}
    A_{ij} &\overset{g_{ij}}{\to} A_{ij} + a, \quad ij={00,01,10,11}, \\
    B_{ij} &\overset{g_{ij}}{\to} B_{ij} + b, \quad ij={00,01,10,11},
\end{align*}
where $g_{00} = g_{01} = g_{11} = 10$ and $g_{10} = 100$.
The proteins degrade at rate $d = 1$:
\begin{equation*}
    a \overset{d}{\to} \varnothing, \quad
    b \overset{d}{\to} \varnothing.
\end{equation*}
Last, the proteins influence the conditions of the $A$ gene and $B$ gene as follows:
\begin{align*}
    A_{0 j} + 2a &\stackrel{h_a}{\underset{f_a}{\rightleftarrows}} A_{1 j}, \quad 
    B_{0 j} + 2b \stackrel{h_a}{\underset{f_a}{\rightleftarrows}} B_{1 j}, \quad
    j = 0,1, \\
    A_{i0} + 2b &\stackrel{h_r}{\underset{f_r}{\rightleftarrows}} A_{i 1}, \quad
    B_{i 0} + 2a \stackrel{h_r}{\underset{f_r}{\rightleftarrows}} B_{i 1}, \quad 
    i = 0, 1.
\end{align*}
Three of the rate parameters are fixed to $h_a=10^{-1}$, $f_a = 1$, and $h_r=10^{-3}$.
However, $f_r$ is a free parameter that controls the extent of protein activation versus repression in the system.
In summary, the MISA gene circuit can be encoded as a vector $X_t \in \{00, 01, 10, 11\}^2 \times \mathbb{N}^2$. The first two coordinates indicate the conditions of the $A$ gene and $B$ gene, and the last two coordinates indicate the populations of the $a$ and $b$ proteins.

The MISA trajectories were simulated using the stochastic simulation algorithm (SSA) implemented in \texttt{PyGillespie} \cite{matthew2023gillespy2}
with various values of the $f_r$ parameter.
The SSA algorithm produces a continuous-time trajectory $(X_t)_{t \geq 0}$; 
however, the states were sampled at uniformly spaced times $t = 0, 1, 2, \ldots$ to produce a discrete-time data set.

\fig\ref{fig:MISA-setup} displays the stochastic switching in the populations of $a$ and $b$ proteins.
The protein populations toggle between four metastable states, which are associated with each gene being ``on'' or ``off''. 
When the rate $f_r$ is low (top panels), the genes are frequently off, leading to protein populations of just 0--20.
When $f_r$ is high (bottom panels), both genes are frequently on, leading to increased populations of 50--150 proteins.
The most interesting behavior occurs for intermediate values of $f_r$ (middle panels), because then one gene is typically on and the other is off.
The genes actively compete to produce more proteins.

The MISA gene circuit is not Markovian in the protein numbers $a$ and $b$.
Nonetheless, a Markov state model can be constructed by applying spectral clustering \cite{coifman2005geometric} to the $(a, b)$ data.
The middle panel of \fig\ref{fig:MISA-setup} shows the outcome of spectral clustering using a Gaussian kernel with bandwidth $\sigma = 50$.
The four identified clusters are interpretable and they efficiently represent the metastable dynamics.

\begin{figure}[t]
    \centering
    \includegraphics[width=1\linewidth]{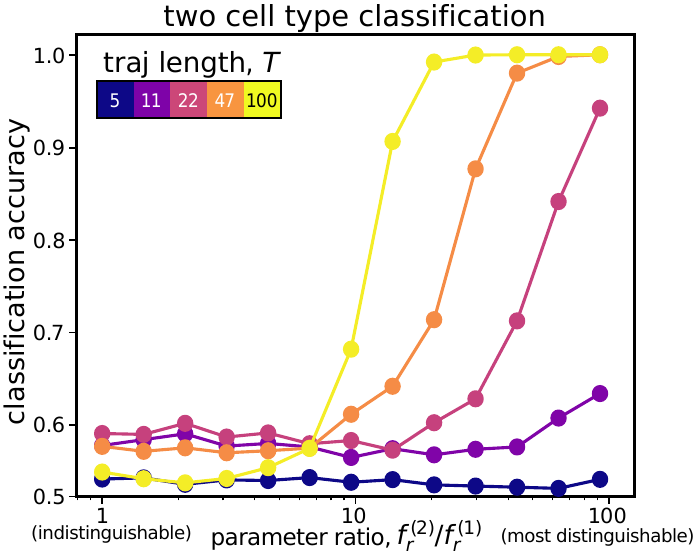}
    \caption{Classification accuracy for synthetic MISA data containing 15 trajectories with  $f_r^{(1)}=0.01$ and 15 trajectories with $f_r^{(2)} \neq f_r^{(1)}$. 
    Trajectories are reliably identified when $T \geq 25$ and $f_r^{(2)} / f_r^{(1)} \geq 25$.}
    \label{fig:fig_MISA_acc}
\end{figure}

Variational EM can be used to untangle multiple Markov chains from a single MISA data set.
This application area is increasingly important, as MISA models with variable parameters are now being engineered in labs \cite{li2018engineering} and applied to understand long chains of cell differentiation \cite{zhou2011understanding}.
As an example application, researchers have obtained \edit{time-series} data for a MISA gene circuit in the bacteriophage $\lambda$ switch \cite{fang2018cell}.
They analyzed the time-series data assuming a single Markov chain, but the Markov chain mixture model provides the ability to discern heterogeneities within the samples. 

To test the feasibility of the mixture modeling approach,
spectral clustering and variational EM were applied to a range of synthetic MISA data sets containing $15$ trajectories with an unbinding rate $f_r^{(1)} = 0.01$ and $15$ trajectories with a different unbinding rate $f_r^{(2)} \neq f_r^{(1)}$.
The algorithms were used to untangle the two populations of trajectories, leading to the results in
\fig\ref{fig:fig_MISA_acc}.
The results indicate that the classification accuracy depends greatly on the trajectory length $T$ and the parameter ratio $f_r^{(2)}/f_r^{(1)}$.
When $f_r^{(2)}/f_r^{(1)} \geq 25$ and $T \geq 25$, the populations can be separated with near-perfect accuracy.
However, short trajectories with $T \leq 5$ cannot be reliably separated for any $f_r^{(2)}$ parameter.
These results support the theoretical analysis in Thm.~\ref{thm:1}, since the Kullback-Leibler divergence increases with $f_r^{(2)}/f_r^{(1)}$ and increases linearly with $T$.

\section{Conclusion} \label{sec:conclude}

This paper has proposed an extension of Markov state modeling that enables the study of heterogeneities within \edit{time-series} data.
Previous work employed Markov state modeling under the assumption that the observations come from a homogeneous population modeled by a single Markov chain.
This paper develops the theory and practice of learning a mixture of different Markov chains simultaneously.

The paper has motivated and tested a variational EM algorithm that automatically identifies the number of chains and the dynamics of each chain. 
\edit{The proposed mixture modeling approach deliberately combines classical, well-understood components, based on the discretization step from Markov state modeling and the fitting step from variational EM.
It contributes an interpretable and competitive baseline for heterogeneous dynamical data.}
The algorithm is computationally efficient: unlike past work, it identifies the number of chains organically, without relying on expensive model comparisons or posterior sampling.

The current bottleneck, shared among all the standard methods for fitting Markov chain mixture models (EM and variational EM), is the convergence to locally optimal parameters.
Running these methods with many random initializations is the currently proposed solution, but it is the computationally limiting aspect of the approach.
Future work should identify and rigorously justify an initialization strategy that works well for variational EM with any trajectory length $T$.

\edit{The paper also provides a theorem that lower bounds the classification error for any Markov chain mixture model, similar to classical information-theoretic bounds.}
The bound is stated in terms of the Kullback-Leibler divergence between the underlying Markov chains, and it \edit{suggests the optimal classification error decreases exponentially in $T$ because the trajectory-level KL divergence grows linearly in $T$}.
This prediction is repeatedly supported in the numerical experiments (Secs.~\ref{sec:synthetic}, \ref{sec:lastfm}, \ref{sec:gene}).
The theory precisely quantifies the established wisdom \cite{ramoni2002bayesian} that long trajectories are better than short ones.
\edit{The bound requires no ergodicity or mixing-time assumptions, making it applicable to transient or non-stationary dynamics common in short experimental trajectories.}

Last, there is a natural question: when should scientists use Markov chain mixture models in the future?
Three of the four experiments (Secs.~\ref{sec:lastfm}, \ref{sec:ultrarunners}, \ref{sec:gene}) involved real data or real biological systems, and the mixture modeling approach led to accurate and interpretable results.
One of the authors (C.E.M.) listened to indie rock on \texttt{Last.fm} in 2007 while the other author (R.J.W.) runs ultramarathons competitively, and they can attest to the qualitative accuracy of the Markov chain mixture models. %
\edit{More broadly, these results reinforce a simple point: careful probabilistic modeling and inference, even when built from classical ingredients, can remain competitive in modern high-throughput settings, serving as an interpretable baseline against which newer approaches are compared.}

In conclusion, Markov chain mixture models have a track record of distinguishing meaningful heterogeneities in human behavior, for example, heterogeneous patterns of surfing the internet \cite{melnykov2016clickclust}, commuting between home, work, and school \cite{zhou2021you}, listening to music (Sec.~\ref{sec:lastfm}), or running ultramarathons (Sec.~\ref{sec:ultrarunners}).
The variational EM algorithm is \edit{an efficient strategy} for fitting these models while automatically selecting the number of components.
Moving forward, Markov chain mixture models could be extended to other areas (chemistry, biology, climate science) where Markov state modeling is already popular.
Indeed, the experiments in Sec.~\ref{sec:gene} demonstrate promising first steps toward applications in the rapidly evolving area of gene expression data analysis \cite{eisen1998cluster,ernst2005clustering,mcdowell2018clustering,mitra2021rvagene}.

\bmhead{Acknowledgements}

The authors thank Elizabeth Read and Adam MacLean for helpful discussions about gene expression time series.
C.E.M. was partially supported by a University of California Society of Hellman Fellows fund and NSF CAREER DMS-2339241. 
R.J.W. was supported by the Office of Naval Research through BRC Award N00014-18-1-2363, the National Science Foundation through FRG
Award 1952777, and Caltech through the Carver Mead New Adventures Fund, under the aegis of Joel A. Tropp.

\bmhead{Code Availability}

Code to reproduce the numerical experiments can be found in the \edit{GitHub} repository \url{https://github.com/chris-miles/MarkovChain-VEM}.

\bibliography{references}%

\end{document}